\newtheorem{theorem}{Theorem}
\begin{document}
\title{Passive Beamforming Design and Channel Estimation for IRS Communication System with Few-Bit ADCs}
%Passive Beamforming Design and Channel Estimation for IRS Communication System with Few-Bit ADCs
%Passive Beamforming Design and Channel Estimation for Intelligent Reflecting Surface Communication System with Few-Bit ADCs
\author{\IEEEauthorblockN{Jingnan Li, Rui Wang and Erwu Liu}
	\IEEEauthorblockA{
		College of Electronics and Information Engineering, Tongji University, Shanghai, China \\
					Emails: 1930712@tongji.edu.cn, ruiwang@tongji.edu.cn, erwu.liu@ieee.org\\				
		 }}
\maketitle

\begin{abstract}
	Utilizing intelligent reflecting surface (IRS) was proven to be efficient in improving the energy efficiency for wireless networks. In this paper, we investigate the passive beamforming and channel estimation for IRS assisted wireless communications with low-resolution analog-to-digital converters (ADCs) at the receiver. We derive the approximate achievable rate by using the Bussgang theorem. Based on the derived analytical achievable rate expression, we maximize the achievable rate by using semidefinite programming (SDP), branch-and-bound (BB), and gradient-based approaches. A maximum likelihood (ML) estimator is then proposed for channel estimation by considering the $\mathrm{1}$-bit quantization ADC. Numerical result shows that the proposed beamforming design and channel estimation method significantly outperforms the existing methods.
\end{abstract}

%\begin{IEEEkeywords}
%	Intelligent reflecting surface, beamforming design, low-resolution analog-to-digital converters, channel estimation.
%\end{IEEEkeywords}

\section{Introduction}
	In recent years, there has been a rapid increase in the spectrum utilization efficiency of wireless networks, thanks to various technological advances such as massive multiple-input multiple-output (MIMO) and millimeter wave communications. However, the high cost of the network in terms of hardware and energy consumption remains a serious challenge in practical implementation \cite{SZhang_Qingqing_2017}. In the sixth generation (6G) system, the peak data rate is expected to reach 100 times that of the fifth generation (5G), i.e., at least $1 \mathrm{T}$ bit/s \cite{ZZhang6G_2019}. The high speed and large amount of data transmission will inevitably consume a lot of energy. To address this challenge, intelligent reflecting surface (IRS) has recently been proposed and attracted great attentions from academia and industry due to the ability of alleviating the dilemma of high energy consumption. IRS is a passive component that can reflect electromagnetic waves and send signals directly without processing, which is completely different from the traditional strategy. Therefore, IRS perfectly meets the requirements of green communication in 6G system. The IRS controls the communication channel intelligently and can be conveniently arranged on the outer wall of the building to solve the problem of high frequency communication. Considering the advantage of the IRS, the 6G white paper also regards it as a key technology \cite{WhitePaper6G-2020}.

	Prior works on IRS-assisted communication system can be found in \cite{QingQing-GLOBECOM2018,QingQing-TWC2019,YanWenjing_IWC2020}. Passive beamforming design and channel estimation have been recognized as two important techniques for the IRS-assisted wireless communication systems. Q. Wu and R. Zhang optimized the reflecting beamforming direction in IRS assisted massive MIMO and multiuser-input single-output (MISO) system by applying semidefinite relaxation (SDR) and alternating optimization techniques \cite{QingQing-GLOBECOM2018,QingQing-TWC2019}. W. Yan, \textit{et al.}, optimized the phase of elements in IRS to maximize the signal-to-noise ratio (SNR) for an IRS-assisted single-input multiuser-output
	(SIMO) system assuming that each element of the IRS is independently opened with a preset \cite{YanWenjing_IWC2020}. Channel estimation of the IRS-assisted system was investigated in \cite{YuanX-WCL2019,ZWang-TWC2020}. Specifically, X. Yuan, \textit{et al}., proposed a three-stage algorithm to estimate the cascaded channel in the IRS-assisted massive MIMO system \cite{YuanX-WCL2019}. Z. Wang, \textit{et al.}, proposed a novel three-phase framework to estimate channel in an IRS-assisted massive MIMO system \cite{ZWang-TWC2020}.
		
 	Although the studies of the IRS-assisted wireless communications have received wide attention, it is worth noting that exiting works assume receiver with infinite or high precision quantization. In practice, we usually apply finite precision quantization analog-to-digital converters (ADCs) to save the processing power at receivers. Therefore,  in this paper, we consider an IRS-assisted SIMO wireless system with few-bit ADCs, which is a novel exploration compared to the works in \cite{QingQing-GLOBECOM2018, QingQing-TWC2019,YanWenjing_IWC2020,YuanX-WCL2019, ZWang-TWC2020}.  	
 	We optimize the passive beamforming by designing the phase shifts at the IRS. In passive beamforming design, we formulate the problem of maximizing the lower bound of achievable rate. In particular, to simplify the computation at a low SNR regime, we approximate the problem and obtain a sub-optimal solution by applying the SDR technique. Moreover, the design problem is also transformed into a complex quadratic programming (CQP) problem and solved by the branch-and-bound (BB) algorithm. To optimize the passive beamforming for arbitrary SNR regime, we propose a method applying gradient descent. As for the channel estimation of the IRS system with few-bit ADC, a maximum likelihood (ML) estimator is proposed and the estimation error of the direct channel is taken into account when estimating the IRS channel link. In specific, the closed form expression of the direct channel estimation error is obtained by deducing the Cram{\'{e}}r-Rao Lower Bound (CRLB). Extensive simulation results are provided to show the superior performance of the proposed beamforming design and channel estimation method.
 		
%%	\emph{Notations}:
%	For a matrix $\mathbf{A}$, $\mathbf{A}^H$ denotes the Hermitian transpose of $\bf A$,
%	${\rm tr}({\bf A})$ stands for the trace of ${\bf A}$.
%	$\mathcal{CN}(\mu ,\sigma ^{2})$ denotes the distribution of a circularly symmetric complex Gaussian random variable with mean $\mu$ and variance $\sigma ^{2}$. $\mathbb{C}^{a\times b}$ represents a complex matrix of size $a\times b$ and $\mathbb(R)^{c\times d}$ represent a real matrix of size $c\times d$. $\mathrm{Re}(\cdot)$ and $\mathrm{Im}(\cdot)$ represent the real and imaginary parts for a real number, respectively. $\lceil \cdot \rceil$ represents the ceiling for a real number, and $\%$ represents the remainder of an integer division. Moreover, ``$\otimes$" and ``$\odot$'' represents the Kronecker product and  Khatri-Rao product, respectively.
%	%$\bar h_i$

\section{System Model}
As shown in Fig. 1, we consider a SIMO uplink wireless communication system where a single-antenna user communicates with a base station (BS) equipped with $M$ antennas, and each antenna of BS has a pair of ADCs. The IRS composed of $N$ passive reflecting elements is aided to assist the user in communicating with BS. In practice, there has a lot of sensors and each IRS is attached with a controller. The controller can appropriately adjusts the on/off state of each passive reflecting element according to the environmental data sent from sensors via a wired link. At the same time, the controller adjusts the phase shifts according to the channel state information (CSI) via a wireless link to improve the performance of the system. We only consider the signal that has been reflected once and ignore the signals reflected by IRS for two or more times due to serious path loss. We assume that all the channel links are quasi-static and flat-fading.
%	 and the CSI (CSI) of all channels involved is perfectly known at the BS and user through the channel training.
	
	In Fig. 1, $\mathbf{h}_{d}\in\mathbb{C}^{M\times 1}$,  $\mathbf{h}_{r}\in\mathbb{C}^{N\times 1}$,  $\mathbf{\textbf{\emph{G}}}\in\mathbb{C}^{M\times N}$ denote the baseband channels from user to BS, from user to IRS, from IRS to BS, respectively. The diagonal phase-shift matrix of the IRS is denoted by $\pmb\Theta=\mathrm{diag} (\beta_1 e^{j\theta_1},\beta_2 e^{j\theta_2},\ldots,\beta_N e^{j\theta_N})$, where $\theta_n\in[0, 2\pi)$ and $\beta_n\in[0, 1]$ are the phase shift and amplitude coefficient of the $n$-th passive reflecting element. Without loss of generality, we set $\beta_n=1,\forall{n}$ by considering the practical situation that each element of IRS is expected to make contributions for optimizing the system performance \cite{QingQing-TWC2019}.

	\begin{figure}[t]
		\centering
			\vspace{0.5cm}
		\includegraphics[scale=0.2]{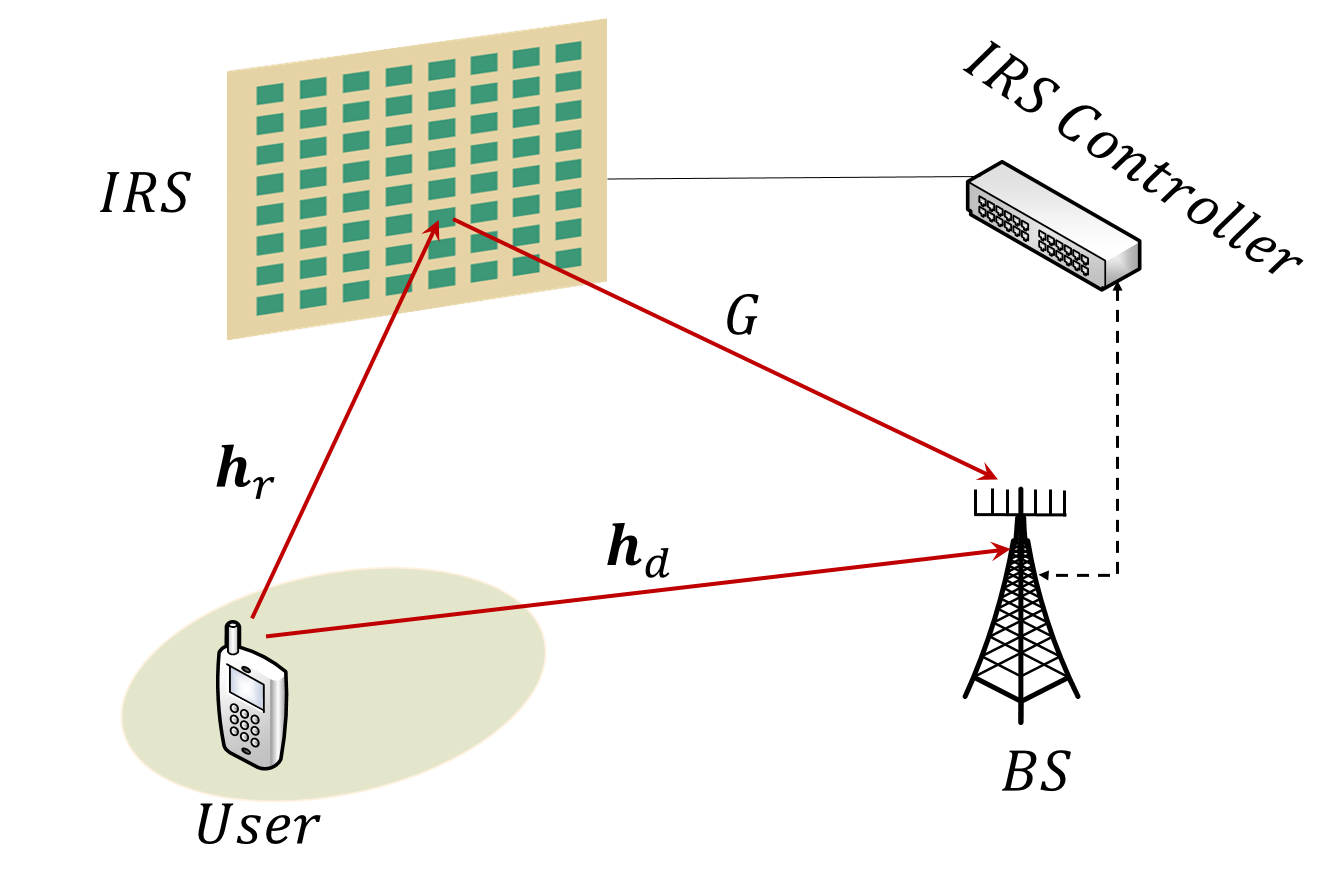}
		\vspace{-0.1cm}
		\caption{An IRS-assisted SIMO wireless communication system with few-bit ADCs at BS.} \label{Fig_mRC}
		\centering
		\vspace{-0.3cm}
	\end{figure}

	The received baseband signal consists of two terms, one is directly transmitted from the user and the other is reflected by the IRS. Therefore, the signal received at BS is given by
		\begin{equation}\label{System-1}
			\begin{split}
				\mathbf{y}=(\mathbf{G}\pmb\Theta\mathbf{h}_{r}+\mathbf{h}_{d})\text{\emph{x}}+\mathbf{w},\\
			\end{split}
		\end{equation}
	where $\mathbf{y}\in\mathbb{C}^{M\times 1}$ denotes the received signal, $\text{\emph{x}}$ is the transmit data of the user, $\mathbf{w}\in\mathbb{C}^{M\times 1}$ is an additive white Gaussian noise (AWGN) with the elements drown from $\mathcal{CN}(0,\sigma_{\mathit{w}}^{2})$ independently. The received signal $\mathbf{y}$ after quantization is denoted by	
		\begin{equation}\label{System-2}
			\begin{split}
				\mathbf{r}=\mathcal{Q}(\mathbf{y})
									=\mathcal{Q}((\mathbf{G}\pmb\Theta\mathbf{h}_{r}+\mathbf{h}_{d})\text{\emph{x}}+\mathbf{w}),\\
			\end{split}
		\end{equation}
where $\mathcal{Q}$ is the quantization operation of the ADC.

\section{Passive Beamforming Design}
By assuming that the CSI is perfectly known at both transmitter and receiver, in this section we focus on designing the passive beamforming with an aim to maximize the lower bound of achievable rate through adjusting the phase shift of each element in the IRS. To this end, we first derive the achievable rate of the considered system by using Bussgang theorem. The received signal after quantization is
		\begin{equation}\label{System-3}
			\begin{split}
				\mathbf{r}&=\mathbf{Fy}+\mathbf{e},\\
			\end{split}
		\end{equation}
	where $\mathbf{F}$ is a linear operation matrix which can be obtained from minimum mean square error of $\mathbf{r}$ from $\mathbf{y}$, and $\mathbf{e}$ is the quantization error. The expression of the achievable rate between receiver and transmitter can be approximated as \cite{Mezghani_IWC2012}
		\begin{equation}\label{System-4}
			\begin{split}
				I=&\mathrm{log}_2|\mathbf{\textbf{I}}+(1-\rho_q)((1-\rho_q)\mathbf{R}_\mathbf{ww}+\rho_q \mathrm{diag}(\mathbf{R}_\mathbf{yy}))^{-1}\\
				&		\times\mathbf{h}\sigma_{\mathit{x}}^{2}\mathbf{h}^H|,\\
			\end{split}
		\end{equation}
	 where $\rho_q$ is a distortion factor depending on the type of quantizer and the number of quantization bits at the BS \cite{Mezghani_IWC2012}, $\mathbf{R}_{\mathbf{ww}}=\mathbf{E}[\mathbf{ww^H}]$ is the covariance matrix of $\mathbf{w}$, $\sigma_{\mathit{x}}^{2}=\mathbf{E}[\textit{xx}^H]$ is the power of transmit signal, $\mathbf{R}_\mathbf{yy}=		\mathbf{R}_\mathbf{ww}+\sigma_{\mathit{x}}^{2}\mathbf{h}\mathbf{h}^H$, $\mathbf{h}=\mathbf{G}\pmb\Theta\mathbf{h}_{r}+\mathbf{h}_{d}\in\mathbb{C}^{M\times 1}$. To maximize the achievable rate, we consider two specific scenarios, as listed below.
	
	\subsection{Beamforming design at Low SNR}	
	At low SNR regime, the approximate lower bound of the mutual information is approximated by
		\begin{equation}\label{System-5}
			\begin{split}
				I_{low SNR}\approx				
				\mathrm{tr} \bigr( \sigma_{\mathit{x}}^{2}(1-\rho_q)\mathbf{R}_\mathbf{ww}^{-1}\mathbf{h}\mathbf{h}^H\bigr).\\						
			\end{split}
		\end{equation}
	With \eqref{System-5}, the optimization of the phase shift at the IRS is given by
		\begin{equation}\label{System-P} \nonumber
			\begin{split}
				(P):  \max_{\bf\Theta}\quad&\mathrm{tr} (\mathbf{R}_\mathbf{ww}^{-1}\mathbf{h}\mathbf{h}^H)\\
				s.t \quad&\theta_i\in[0,2\pi), \quad\emph{i}=1,2,\ldots,N.
			\end{split}
		\end{equation}
	Plugging $\mathbf{h}=\mathbf{G}\pmb\Theta\mathbf{h}_{r}+\mathbf{h}_{d}$ into the objective function of P, we have
			\begin{equation}\label{System-trace}
				\begin{split}
					\mathrm{tr}\Big(\mathbf{R}_\mathbf{ww}^{-1}\mathbf{h}&\mathbf{h}^H\Big)
					=2\mathrm{Re}\left(  \mathbf{h}_{d}^{H}\mathbf{R}_\mathbf{ww}^{-1}\mathbf{G}\mathfrak{R} \mathbf{u} \right)	\\
					&+\mathbf{u}^H\mathfrak{R}^{H}\mathbf{G}^{H}\mathbf{R}_\mathbf{ww}^{-1}\mathbf{G}\mathfrak{R}\mathbf{u}
					+\mathrm{tr}\big(\mathbf{h}_{d}\mathbf{h}_{d}^{H}\mathbf{R}_\mathbf{ww}^{-1}\big),\\
				\end{split}
			\end{equation}
		where $\mathfrak{R}=\mathrm{diag}(\mathbf{h}_{r})$, $\mathbf{u}=(u_1,\ldots,u_N)^T$, and $u_n=e^{j\theta_n}$, $\forall n$. With \eqref{System-trace}, the problem $P$ can be simplified to	
			\begin{equation}\label{Eqn_1} \nonumber
				\begin{split}
					(P1):\max_{\bf{u}}\quad& 2\mathrm{Re}\left(  \mathbf{h}_{d}^{H}\mathbf{R}_\mathbf{ww}^{-1}\mathbf{G}\mathfrak{R} \mathbf{u} \right)
					+\mathbf{u}^H\mathfrak{R}^{H}\mathbf{G}^{H}\mathbf{R}_\mathbf{ww}^{-1}\mathbf{G}\mathfrak{R}\mathbf{u}\\
					s.t\quad  &|u_i|=1, \quad\emph{i}=1,2,\ldots,N.
				\end{split}
			\end{equation}
			
	To solve the non-convex optimization problem  $P$1, we apply SDR technique and reformulate it into a semidefinite programming (SDP) problem or a CQP problem which can be solved by branch-and-bound algorithm efficiently.

	\subsubsection{SDP}

		Problem $P1$ is a non-convex quadratically constrained quadratic program (QCQP), it can be approximated as a SDP problem following \cite{QingQing-TWC2019} and \cite{YanWenjing_IWC2020}. We import an instrumental variable $t$ to reformulate problem in $P1$ as a homogeneous QCQP		
			\begin{equation} \label{Eqn_2}
   				\begin{split}
					\max_\mathbf{\bar{\mathbf{u}}}\quad & \bar{\mathbf{u}}^H(\mathbf{R}+\mathbf{V})\bar{\mathbf{u}} \\
					s.t. \quad& |u_n|=1, \forall{n}=1,\ldots ,N+1,
				\end{split}
			\end{equation}
		where $\bar{\mathbf{u}}= \begin{bmatrix} \mathbf{u} \\ t \end{bmatrix}$,  $\mathbf{R}=\begin{bmatrix}
		\mathfrak{R}^H\mathbf{G}^H\mathbf{R}_\mathbf{ww}^{-1}\mathbf{G}\mathfrak{R} & \mathfrak{R}^H\mathbf{G}^H\mathbf{R}_\mathbf{ww}^{-1}\mathbf{h}_{d} \\
		{\mathbf{h}_{d}}^H\mathbf{R}_\mathbf{ww}^{-1}\mathbf{G}\mathfrak{R}&0
		\end{bmatrix}$, $\mathbf{\emph{V}}=\begin{bmatrix}
		 \mathbf{\emph{O}}_{N\times N}& \mathbf{\textbf{\emph{0}}}_{N\times 1} \\
		{\mathbf{h}_{d}}^H\mathbf{R}_\mathbf{ww}^{-1}{diag}(\mathbf{h}_{r})&0
		\end{bmatrix}$. Note that $\bar{\mathbf{u}}^H(\mathbf{R}+\mathbf{V})\bar{\mathbf{u}}=\mathrm{tr}((\mathbf{R}+\mathbf{V})\mathbf{U})$, where $\mathbf{U}=\bar{\mathbf{u}}\bar{\mathbf{u}}^H$ is a positive semidefinite matrix with $\mathrm rank({\mathbf{U}})=1$. By relaxing the rank constraint, problem $P1$ is transformed into
			\begin{equation} \label{Eqn_3}
				\begin{split}
					\max_{\mathbf{U}} \quad& \mathrm{tr}\left( \left( \mathbf{R}+\mathbf{V}\right) \mathbf{U}\right) \\
					s.t. \quad&\mathbf{U}\succeq{0},\mathbf{U}_{n,n}=1, \forall{n}=1,\ldots ,N+1
				\end{split}
			\end{equation}
		which is a SDP problem, we can solve this problem by using convex optimization solvers such as CVX \cite{CVX}. Generally, the optimal $\mathbf{U}$ solved from problem \eqref{Eqn_3} may not be rank-one. To get the sub-optimal solution from $\mathbf{U}$ with $rank(\mathbf{U})\neq1$, we first take a eigenvalue decomposition of $\mathbf{U}$ as $\mathbf{U}=\mathbf{T}{\bf \Sigma}\mathbf{T}^H$, $\mathbf{T}$ is a Unitary matrix with size of $(N+1)\times(N+1)$, and ${\bf \Sigma}$ is a diagonal matrix. A sub-optimal solution can be obtained as $\bar{\mathbf{u}}= \mathbf{T}{\bf \Sigma}^{1/2}\pmb{\gamma}$, where $\pmb{\gamma}\in\mathbb{C}^{(N+1)\times 1}$ is a vector with each of its elements randomly drown from the circularly symmetric complex Gaussian (CSCG) distribution $\mathcal{CN}(0,1)$. We choose one $\pmb{\gamma}$ which attains the maximum value in \eqref{Eqn_3} from all likely vectors. Finally, the solution is given by $\mathbf{u}=\left(e^{j\mathrm{arg}(\frac{\bar u_1}{\bar u_{(N+1)} })},\cdots,e^{j\mathrm{arg}(\frac{\bar u_N}{\bar u_{(N+1)}})} \right)$.
		
	\subsubsection{CQP}					
		The result obtained by solving SDP may not be optimal if $rank(\mathbf{U})$ is not one. Next we convert the problem $P1$ into the form of CQP, and use branch-and-bound algorithm to find an approximately optimal solution \cite{ChengLu_arXiv}. 	
Denote   $\mathbf{Q}=-2\left(\mathfrak{R}^{H}\mathbf{G}^{H}\mathbf{R}_\mathbf{ww}^{-1}\mathbf{G}\mathfrak{R}\right)\in\mathbb{C}^{N\times N}$, $\mathbf{c}^H=-2\mathbf{h}_{d}^{H}\mathbf{R}_\mathbf{ww}^{-1}\mathbf{G}\mathfrak{R}$, problem $P$ can be reformulated as
				\begin{equation}\label{system-P1} \nonumber
					\begin{split}
						(P2):\min_{\mathbf{u}} \quad& \frac{1}{2}\mathbf{u}^H\mathbf{Q}\mathbf{u}+\mathrm{Re}(\mathbf{c}^H\mathbf{u})\\
						s.t. \quad|&\mathbf{u}_i|=1, \mathrm{arg}(\mathbf{u}_i)\in[0,2\pi), \emph{i}=1,2,\ldots,N.
					\end{split}
				\end{equation}
To solve $P2$, we import an $N\times N$ complex matrix $\mathbf\Xi= \mathbf{u}\mathbf{u}^H$ and apply SDR technique to convert $P2$ into
%				\begin{equation}\label{system-addP1} \nonumber
%					\begin{split}
%						\min_{\mathbf{u}} \quad& \frac{1}{2}\mathrm{tr}(\mathbf{Q}\mathbf\Xi)+\mathrm{Re}(\mathbf{c}^H\mathbf{u})\\
%					s.t. \quad|&\mathbf\Xi_{ii}|=1, \emph{i}=1,2,\ldots,N \\
%						&\mathbf\Xi=\mathbf{u}\mathbf{u}^H
%					\end{split}
%				\end{equation}	

%	
%	
%	which can be equality reformulated as
%				\begin{equation}\label{system-add1}
%					\begin{split}
%						&\mathbf\Xi\succeq \mathbf{u} \mathbf{u}^H\\
%					and\quad	&\mathbf\Xi_{ii}=|\mathbf{u}_i|^2, i=1,2,...,N.
%					\end{split}
%				\end{equation}
%	and the convex envelop of the non-convex constraint can be write as $|\mathbf{u}_i|\leqslant 1$. This problem is a special case according to \cite{ChengLu_arXiv}, which can be written as
				\begin{equation}\label{system-P3} \nonumber
					\begin{split}
						(P3):\min_{\mathbf{u}} \quad& \frac{1}{2}\mathrm{tr}(\mathbf{Q}\mathbf\Xi)+\mathrm{Re}(\mathbf{c}^H\mathbf{u})\\
						s.t. \quad|&|\mathbf{u}_i|\leqslant 1,\\
								&|\mathbf\Xi_{ii}|=1, \emph{i}=1,2,\ldots,N, \\
								&\mathbf\Xi \succeq \mathbf{u}\mathbf{u}^H,
					\end{split}
				\end{equation}
in which we relaxe $\mathbf\Xi= \mathbf{u}\mathbf{u}^H$ by $\mathbf\Xi \succeq \mathbf{u}\mathbf{u}^H$	and drops the constraints $\mathrm{arg}(\mathbf{u}_i)\in[0,2\pi), \emph{i}=1,2,\ldots,N$. Thus, problem $P3$ can be solved by branch-and-bound algorithm \cite{ChengLu_arXiv}.
			
	\subsection{Beamforming design for general SNR}\label{SDR+SDP}
For general SNR, we propose a gradient descent based method to optimize the passive IRS beamforming. The achievable rate in \eqref{System-4} can be rewritten as
%	  We also set $\mathbf{R}_\textit{xx}=1$. For that covariance matrix $\mathbf{R}_\mathbf{ww}$ is diagonal and  $\mathbf{R}_\mathbf{ww}=\sigma_{\mathit{w}}^{2}\mathbf{I}$, so we can write \eqref{System-3} as
	 \begin{equation}\label{System-55}
		 \begin{split}
			 I=&\mathrm{log}_2(1+(1-\rho_q)\mathbf{h}^H(\mathbf{R}_\mathbf{ww}+\rho_q \mathrm{diag}(\mathbf{h}\mathbf{h}^H))^{-1}\mathbf{h}),\\
		 \end{split}
	 \end{equation}
%Notice that $\mathbf{h}^H(\mathbf{R}_\mathbf{ww}+\rho_q \mathrm{diag}(\mathbf{h}\mathbf{h}^H))^{-1}\mathbf{h}$ is a real number, so our problem is
and the beamforming design problem is given by
	 \begin{equation}\label{System-P4} \nonumber
		 \begin{split}
			 (P4):  \max_{\pmb\theta}\quad & \Gamma(\pmb\theta)=\mathbf{h}^H(\mathbf{R}_\mathbf{ww}+\rho_q \mathrm{diag}(\mathbf{h}\mathbf{h}^H))^{-1}\mathbf{h} \\
			 s.t \quad&\theta_i\in[0,2\pi), \quad\emph{i}=1,2,\ldots,N,
		 \end{split}
	 \end{equation}
 where $\pmb\theta=[\theta_1, \theta_2,\ldots,\theta_N]^T$. Due to the nonconvexity of $P4$, we solve the problem by gradient descent method and update $\pmb\theta$ as
 % The problem $P4$ is non-convex, but  objective function is a real function of variables $\pmb\theta$. So
		\begin{equation}\label{System-6}
			\begin{split}
				\pmb\theta^{k+1}=\pmb\theta^k+\alpha^k \nabla \Gamma(\theta^k),\\
			\end{split}
		\end{equation}
	where $\pmb\theta^{k}$ is the vector updated at the $k$-th iteration, $\alpha^k$ is the step size used in the $k$-th iteration, and $\nabla \Gamma(\pmb\theta^k)$ is  gradient at $\pmb\theta^k$. We move $\pmb\theta^{k}$ among the steepest descent with a step of $-\alpha^k \nabla \Gamma(\pmb\theta^k)$. By denoting $\pmb\aleph=(\mathbf{R}_\mathbf{ww}+\rho_q \mathrm{diag}(\mathbf{h}\mathbf{h}^H))^{-1}$, i.e., $\pmb\aleph=\mathrm{diag}(\frac{1}{\sigma_{\mathit{w}}^{2}+\rho_q |h_1|^2},\ldots,\frac{1}{\sigma_{\mathit{w}}^{2}+\rho_q|h_M|^2})$, the $i$-th element of $\mathbf{h}$ is given by
		\begin{equation} \label{Eqn_4}
			\begin{split}
				h_i &= \sum\limits_{k=1}^{N}g_{ik} h_{k,r} e^{j\theta_k}+h_{k,d},
			\end{split}
	\end{equation}
	where $g_{ik}$ is the $i$-th row and the $k$-th column element of $\mathbf{G}$, $h_{k,r}$ and $h_{k,d}$ are the $k$-th element of $\mathbf{h}_r$ and $\mathbf{h}_d$ respectively. We have
			\begin{equation}\label{System-7}
				\begin{split}
					\partial \Gamma(\pmb\theta)= \mathbf{h}^H\pmb\aleph\mathbf{h}^H	=\frac{M}{\rho_q}-\frac{\sigma_{\mathit{w}}^{2}}{\rho_q}\sum\limits_{k=1}^{M}\frac{1}{\sigma_{\mathit{w}}^{2}+\rho_q|h_k|^2},
				\end{split}
			\end{equation}	
	and the gradient of $\theta_i$ is
			\begin{equation}\label{System-8}
				\begin{split}
					\frac{\partial \Gamma( \pmb\theta)}{\partial \theta_i}=&\sigma_{\mathit{w}}^{2}\sum\limits_{k=1}^{M} [ (\bar{h_k}g_{ki}s_ih_{i,r}e^{j(\theta_i+\frac{\pi}{2})}+{h_k}\overline{g_{ki}s_ih_{i,r}}e^{-j(\theta_i+\frac{\pi}{2})})\\
					&\times (\sigma_{\mathit{w}}^{2}+\rho_q|h_k|^2)^{-2} ]\\
					=&\sigma_{\mathit{w}}^{2}s_i\sum\limits_{k=1}^{M}\left[ \frac{2Re(\bar{h_k}g_{ki}s_ih_{i,r}e^{j(\theta_i+\frac{\pi}{2})})}{(\sigma_{\mathit{w}}^{2}+\rho_q|h_k|^2)}\right].
				\end{split}
			\end{equation}	
	Thus, we derive the gradient of vector $\pmb\theta$, i.e., $\nabla \Gamma(\pmb\theta)=(\frac{\partial \Gamma(\pmb\theta)}{\partial \theta_1},\ldots,\frac{\partial \Gamma(\pmb\theta)}{\partial \theta_N})^T$.

\section{Channel estimation}
 In the IRS assisted wireless communication system with few-bit ADCs, we assume that CSI is well known in previous section. But in reality, the CSI is obtained by conducting the estimation. How to accurately estimate the channel is very important for beamforming design. In this section, we focus on the estimation of the direct channel and the reflecting channel with $1$-bit ADCs.
	
\subsection{Direct channel estimation}
In Phase $\mathrm{\uppercase\expandafter{\romannumeral1}}$, we estimate the direct channel by turning all elements of IRS into state ``off". In the channel estimation stage, user sends a pilot sequence consisting of $\tau$ symbols
	\begin{equation}\label{System-ad1}
		\begin{split}
			\textbf{\emph{a}}_\mathrm{\uppercase\expandafter{\romannumeral1}} =[a_1,\ldots,a_\tau]^T.
		\end{split}
	\end{equation}
The signal received at BS during time slot $\tau$ is
	\begin{equation}\label{System-10}
		\begin{split}
			\textbf{Y}_{p_{\mathrm{\uppercase\expandafter{\romannumeral1}}}}=\mathbf{h}_{d}\textbf{\emph{a}}_\mathrm{\uppercase\expandafter{\romannumeral1}}^T+\textbf{W}_{p_{\mathrm{\uppercase\expandafter{\romannumeral1}}}},\\
		\end{split}
	\end{equation}
	where $\textbf{Y}_{p_{\mathrm{\uppercase\expandafter{\romannumeral1}}}}\in\mathbb{C}^{M\times \tau}$, and $\textbf{W}_{p_{\mathrm{\uppercase\expandafter{\romannumeral1}}}}\in\mathbb{C}^{M\times \tau}$ is an AWGN. Next we vectorize the pilot signal matrix received by the BS as
	\begin{equation}\label{System-110}
		\begin{split}
			\textbf{y}_{p_{\mathrm{\uppercase\expandafter{\romannumeral1}}}}&=\mathrm{vec}(\textbf{Y}_{p_{\mathrm{\uppercase\expandafter{\romannumeral1}}}})=(\textbf{\emph{a}}_\mathrm{\uppercase\expandafter{\romannumeral1}}\otimes\textbf{I}_{M})\mathbf{h}_{d}+\textbf{w}_{p_{\mathrm{\uppercase\expandafter{\romannumeral1}}}}=\textbf{A}_{p_{\mathrm{\uppercase\expandafter{\romannumeral1}}}}\mathbf{h}_{d}+\textbf{w}_{p_{\mathrm{\uppercase\expandafter{\romannumeral1}}}}.
		\end{split}
	\end{equation}
The complex signal $\textbf{y}_{p_{\mathrm{\uppercase\expandafter{\romannumeral1}}}}$ can be written into real domain
	\begin{equation}\label{System-11}
		\begin{split}
			\mathbf{y}_{R,p_{\mathrm{\uppercase\expandafter{\romannumeral1}}}}=\mathbf{A}_{R,p_{\mathrm{\uppercase\expandafter{\romannumeral1}}}}\mathbf{h}_{d_{R}}+\mathbf{w}_{R,p_{\mathrm{\uppercase\expandafter{\romannumeral1}}}},\\
		\end{split}
	\end{equation}
where
	\begin{align}\label{System-12}
		\mathbf{y}_{R,p_{\mathrm{\uppercase\expandafter{\romannumeral1}}}}&=\begin{bmatrix} \mathrm{Re}(\mathbf{y}^T_{p_{\mathrm{\uppercase\expandafter{\romannumeral1}}}}) & \mathrm{Im}(\mathbf{y}^T_{p_{\mathrm{\uppercase\expandafter{\romannumeral1}}}}) \end{bmatrix}^T\in\mathbb{R}^{2M\tau\times 1},
		\\
		\mathbf{A}_{R,p_{\mathrm{\uppercase\expandafter{\romannumeral1}}}}&=
		\begin{bmatrix}
			\mathrm{Re}( \mathbf{A}_{p_{\mathrm{\uppercase\expandafter{\romannumeral1}}}})&
			-\mathrm{Im}(\mathbf{A}_{p_{\mathrm{\uppercase\expandafter{\romannumeral1}}}}) \\
			\mathrm{Im}( \mathbf{A}_{p_{\mathrm{\uppercase\expandafter{\romannumeral1}}}}) &
			\mathrm{Re}( \mathbf{A}_{p_{\mathrm{\uppercase\expandafter{\romannumeral1}}}})
		\end{bmatrix}
		\in\mathbb{R}^{2M\tau\times 2M},
		\\
		\mathbf{h}_{d_{R}}&=
		\begin{bmatrix}
			\mathrm{Re}( \mathbf{h}^T_{d_{R}}) &  \mathrm{Im}( \mathbf{h}^T_{d_{R}})
		\end{bmatrix}^T\in\mathbb{R}^{2M\times 1},
		\\
		\mathbf{w}_{R,p_{\mathrm{\uppercase\expandafter{\romannumeral1}}}}&=
		\begin{bmatrix}
			\mathrm{Re}(\mathbf{w}^T_{p_{\mathrm{\uppercase\expandafter{\romannumeral1}}}}) &
			\mathrm{Im}(\mathbf{w}^T_{p_{\mathrm{\uppercase\expandafter{\romannumeral1}}}})
		\end{bmatrix}^T
		\in\mathbb{R}^{2M\tau\times 1}.
	\end{align}
After $\mathrm{1}$-bit quantization, the signal can be expressed as $\textbf{r}_{p_{\mathrm{\uppercase\expandafter{\romannumeral1}}}}=\mathcal{Q}(\textbf{y}_{p_{\mathrm{\uppercase\expandafter{\romannumeral1}}}})$,  and the $i$-th output of the  $\mathrm{1}$-bit ADC is
	\begin{equation}\label{System-15}	
		\begin{split}
		 	\textbf{r}_{R,{p_{\mathrm{\uppercase\expandafter{\romannumeral1}}}},i}= \mathrm{sgn}(\text{\emph{y}}_{R,p_{\mathrm{\uppercase\expandafter{\romannumeral1}}},i}),\\
		\end{split}
	\end{equation}	
where sgn$(\cdot)$ is the sign function
	\begin{equation}\label{System-sign}	
		\mathrm{sgn}(x)=
		\begin{cases}
			1 \quad &if \quad x\geq 0\\
			-1 \quad &if \quad x<0
		\end{cases}.
	\end{equation}
We define the $i$-th line of $\mathbf{A}_{R,p_{\mathrm{\uppercase\expandafter{\romannumeral1}}}}$ as $\textbf{\emph{a}}_{R,p_{\mathrm{\uppercase\expandafter{\romannumeral1}}},i}^T$, i.e., $\mathbf{A}_{R,p_{\mathrm{\uppercase\expandafter{\romannumeral1}}}}=[\textbf{\emph{a}}_{R,p_{\mathrm{\uppercase\expandafter{\romannumeral1}}},i},\dots,\textbf{\emph{a}}_{R,p_{\mathrm{\uppercase\expandafter{\romannumeral1}}},2M\tau}]^T$. Based on the definition of $\textbf{\emph{a}}_{R,p_{\mathrm{\uppercase\expandafter{\romannumeral1}}},i}$, the base station performs the sign-refinement which can be expressed as
	\begin{equation}\label{System-16}	
		\begin{split}
			\tilde{\textbf{\emph{a}}}_{R,p_{\mathrm{\uppercase\expandafter{\romannumeral1}}},i}= \textbf{r}_{R,{p_{\mathrm{\uppercase\expandafter{\romannumeral1}}}},i}\textbf{\emph{a}}_{R,p_{\mathrm{\uppercase\expandafter{\romannumeral1}}},i}.
		\\
		\end{split}
	\end{equation}	
%
%With this defines and using the similar logic from 这就话是抄的MoJianhua
The ML channel estimator is given as \cite{MoJianhua_TOC2016}
	\begin{equation}\label{System-17}	
		\begin{split}
			\breve{\mathbf{h}}_{d_{R,ML}}=\underset{\acute{\mathbf{h}}_{d_{R}}\in\mathbb{R}^{2M\times 1}}{\mathrm{argmax}}\overset{2M\tau}{\underset{i=1}{\sum}}\mathrm{ln}(\Phi(\sqrt{\frac{2}{\sigma_{w}^2}}\tilde{\textbf{\emph{a}}}_{R,p_{\mathrm{\uppercase\expandafter{\romannumeral1}}},i}^T\acute{\mathbf{h}}_{d_{R}})).
			\\
		\end{split}
	\end{equation}	
The CRLB of the direct channel estimation is presented in the following theorem.
	\begin{theorem}
		In $\mathrm{1}$-bit quantization system, if $\breve{\mathbf{h}}_{d_{R}}$ is the unbiased estimation of real direct channel $\mathbf{h}_{d_{R}}$, the mean square error of channel estimation is lower bounded by
			\begin{equation}\label{System-28}
				\begin{split}
					\mathrm{MSE}(\breve{\mathbf{h}}_{d_{R,ML}})\geq\mathrm{tr}(\mathbf{J}^{-1}),\\	
				\end{split}
			\end{equation}
where ${\mathbf J} \in {\mathbb C}^{2M\times 2M}$ is Fisher information matrix given as
			\begin{equation}\label{System-29}
				\begin{split}
					\mathbf{J}&=
					\overset{2M\tau}{\underset{i=1}{\sum}}
					\frac{\frac{2}{\sigma_{w}^2}(\Phi^{'}(\sqrt{\frac{2}{\sigma_{w}^2}}\textbf{\emph{a}}^T_{R,p_{\mathrm{\uppercase\expandafter{\romannumeral1}}},i}\mathbf{h}_{d_{R}}))^2\tilde{\textbf{\emph{a}}}_{R,p_{\mathrm{\uppercase\expandafter{\romannumeral1}}},i}\tilde{\textbf{\emph{a}}}^T_{R,p_{\mathrm{\uppercase\expandafter{\romannumeral1}}},i}}
					{\Phi(\sqrt{\frac{2}{\sigma_{w}^2}}\textbf{\emph{a}}^T_{R,p_{\mathrm{\uppercase\expandafter{\romannumeral1}}},i}\mathbf{h}_{d_{R}})(1-\Phi(\sqrt{\frac{2}{\sigma_{w}^2}}\textbf{\emph{a}}^T_{R,p_{\mathrm{\uppercase\expandafter{\romannumeral1}}},i}\mathbf{h}_{d_{R}}))}
					\\	
 				\end{split}
			\end{equation}
		with
			\begin{equation}\label{System-30}
					\Phi^{'}(\mathrm{x})=\frac{1}{\sqrt{2\pi  }}\mathrm{exp}(-\frac{\mathrm{x}^2}{2}).
			\end{equation}
	\end{theorem}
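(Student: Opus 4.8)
The plan is to treat this as a textbook Cram\'{e}r--Rao lower bound (CRLB) argument, in which the only real work is assembling the Fisher information matrix $\mathbf{J}$ for the likelihood induced by $1$-bit quantization; the bound \eqref{System-28} then follows from the generic matrix CRLB together with a trace. Throughout, write $\mathbf{a}_i$ for the $i$-th row of $\mathbf{A}_{R,p_{\mathrm{\uppercase\expandafter{\romannumeral1}}}}$ (i.e.\ $\textbf{\emph{a}}_{R,p_{\mathrm{\uppercase\expandafter{\romannumeral1}}},i}$) and $r_i$ for the corresponding $1$-bit output. First I would write the probability mass function of each quantized measurement. Since every entry of $\mathbf{w}$ is $\mathcal{CN}(0,\sigma_{\mathit{w}}^{2})$, its real and imaginary parts are independent $\mathcal{N}(0,\sigma_{\mathit{w}}^{2}/2)$, so $\mathrm{Pr}(r_i=1)=\mathrm{Pr}(w_{R,i}\geq-\mathbf{a}_i^T\mathbf{h}_{d_R})=\Phi(\sqrt{2/\sigma_{\mathit{w}}^{2}}\,\mathbf{a}_i^T\mathbf{h}_{d_R})$, where $\Phi$ is the standard normal CDF and the factor $\sqrt{2/\sigma_{\mathit{w}}^{2}}$ is exactly the reciprocal standard deviation of $w_{R,i}$. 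With the sign-refined vector $\tilde{\mathbf{a}}_i=r_i\mathbf{a}_i$ of \eqref{System-16}, both outcomes $r_i=\pm1$ collapse into the single factor $\Phi(\sqrt{2/\sigma_{\mathit{w}}^{2}}\,\tilde{\mathbf{a}}_i^T\mathbf{h}_{d_R})$, which is precisely the per-term likelihood behind the ML estimator \eqref{System-17}.

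The second step is to compute $\mathbf{J}$. Because the quantized outputs are conditionally independent across the $2M\tau$ measurements, the total information splits as $\mathbf{J}=\sum_i\mathbf{J}_i$. For a single observation I would differentiate the per-term log-likelihood $\ln\Phi(\sqrt{2/\sigma_{\mathit{w}}^{2}}\,r_i\mathbf{a}_i^T\mathbf{h}_{d_R})$ with respect to $\mathbf{h}_{d_R}$, obtaining by the chain rule a score $\mathbf{s}_i$ proportional to $r_i\,\mathbf{a}_i$, and then form $\mathbf{J}_i=\mathbf{E}[\mathbf{s}_i\mathbf{s}_i^T]$ as an average over the two values $r_i=\pm1$. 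Writing $z_i=\sqrt{2/\sigma_{\mathit{w}}^{2}}\,\mathbf{a}_i^T\mathbf{h}_{d_R}$ and using that $\Phi'$ is even (so the score's argument is insensitive to the sign of $r_i$) together with $r_i^2=1$, the two outcomes contribute the common rank-one matrix $\frac{2}{\sigma_{\mathit{w}}^{2}}(\Phi'(z_i))^2\tilde{\mathbf{a}}_i\tilde{\mathbf{a}}_i^T$ weighted by $\Phi(z_i)^{-1}$ and $(1-\Phi(z_i))^{-1}$, respectively. The identity $\Phi(z_i)^{-1}+(1-\Phi(z_i))^{-1}=[\Phi(z_i)(1-\Phi(z_i))]^{-1}$ then merges these into exactly the summand of \eqref{System-29}, with $\Phi'$ as in \eqref{System-30}.

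The final step is purely structural. Invoking the multivariate CRLB for an unbiased estimator gives $\mathrm{Cov}(\breve{\mathbf{h}}_{d_{R,ML}})\succeq\mathbf{J}^{-1}$ in the positive-semidefinite order. Since the estimator is assumed unbiased, its MSE matrix coincides with its covariance, so $\mathrm{MSE}(\breve{\mathbf{h}}_{d_{R,ML}})=\mathrm{tr}(\mathrm{Cov}(\breve{\mathbf{h}}_{d_{R,ML}}))$. Taking the trace of the matrix inequality and using that $\mathbf{A}\succeq\mathbf{B}$ implies $\mathrm{tr}(\mathbf{A})\geq\mathrm{tr}(\mathbf{B})$ yields $\mathrm{MSE}(\breve{\mathbf{h}}_{d_{R,ML}})\geq\mathrm{tr}(\mathbf{J}^{-1})$, which is \eqref{System-28}.

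I expect the main obstacle to be the Fisher-information computation of the second step, not the CRLB invocation itself: one must correctly average the squared score over the two quantization outcomes, exploit the evenness of $\Phi'$ so that the numerator argument is $\mathbf{a}_i^T\mathbf{h}_{d_R}$ rather than the sign-refined version, and verify that the cross terms merge into the compact $\Phi(1-\Phi)$ denominator. A secondary point worth stating explicitly is the regularity assumption that $\mathbf{J}$ is nonsingular---equivalently that the pilot design $\mathbf{A}_{R,p_{\mathrm{\uppercase\expandafter{\romannumeral1}}}}$ has full column rank $2M$, so that the rank-one terms $\mathbf{a}_i\mathbf{a}_i^T$ sum to a positive-definite matrix---ensuring $\mathbf{J}^{-1}$ exists and the bound is finite.
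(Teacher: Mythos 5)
Your proposal is correct, and it reaches the Fisher information matrix \eqref{System-29} by a slightly different computational route than the paper. The paper works with the second derivative of the log-likelihood: it computes $L''(\mathbf{h}_{d_R})$ in \eqref{System-ML-log2}, forms $\mathbf{J}=-\mathrm{E}[L''(\mathbf{h}_{d_R})]$ in \eqref{System-Fisher} by averaging over the two outcomes $r_i=\pm 1$ with weights $\Phi(z_i)$ and $1-\Phi(z_i)$ (where $z_i=\sqrt{2/\sigma_w^2}\,\mathbf{a}_i^T\mathbf{h}_{d_R}$), and relies on the $\Phi''$ terms cancelling in expectation---which works because $\Phi''(x)=-x\,\Phi'(x)$ is odd, so $\Phi''(z_i)+\Phi''(-z_i)=0$---before merging the remaining terms via $\Phi(-z)=1-\Phi(z)$. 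You instead use the score form $\mathbf{J}_i=\mathrm{E}[\mathbf{s}_i\mathbf{s}_i^T]$, which never requires a second derivative or the $\Phi''$ cancellation: the two outcomes contribute $(2/\sigma_w^2)(\Phi'(z_i))^2\mathbf{a}_i\mathbf{a}_i^T$ weighted by $\Phi(z_i)^{-1}$ and $(1-\Phi(z_i))^{-1}$, and the elementary identity $\Phi^{-1}+(1-\Phi)^{-1}=[\Phi(1-\Phi)]^{-1}$ gives \eqref{System-29} directly. The two definitions of Fisher information agree under the usual regularity conditions (which hold here since $\Phi$ is smooth and strictly between $0$ and $1$), so both derivations are valid; yours is marginally more elementary, while the paper's makes the connection to the observed information (the Hessian appearing in Newton-type ML iterations) explicit. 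Your closing remark about nonsingularity of $\mathbf{J}$---equivalently, that the rank-one terms $\mathbf{a}_i\mathbf{a}_i^T$ span $\mathbb{R}^{2M}$, which holds when the pilot matrix $\mathbf{A}_{R,p_{\mathrm{\uppercase\expandafter{\romannumeral1}}}}$ has full column rank---is a genuine regularity condition that the paper leaves implicit, and it is worth stating since otherwise $\mathrm{tr}(\mathbf{J}^{-1})$ is not even defined.
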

	\begin{proof}
		By taking the first derivative and the second derivative of ML function \eqref{System-17}, we can have 
			\begin{equation}\label{System-ML-log1}	
				\begin{split}
					L^{'}(\mathbf{h}_{d_{R}})=
					\overset{2M\tau}{\underset{i=1}{\sum}}
					\frac{\sqrt{\frac{2}{\sigma_{w}^2}}\Phi^{'}(\sqrt{\frac{2}{\sigma_{w}^2}}\tilde{\textbf{\emph{a}}}^T_{R,p_{\mathrm{\uppercase\expandafter{\romannumeral1}}},i}\mathbf{h}_{d_{R}})}
					{\Phi(\sqrt{\frac{2}{\sigma_{w}^2}}\tilde{\textbf{\emph{a}}}^T_{R,p_{\mathrm{\uppercase\expandafter{\romannumeral1}}},i}\mathbf{h}_{d_{R}})}
					\tilde{\textbf{\emph{a}}}_{R,p_{\mathrm{\uppercase\expandafter{\romannumeral1}}},i},
				\end{split}
			\end{equation}
and \eqref{System-ML-log2}.
%-------长公式			
			\newcounter{TempEqCnt} % 创建临时变量TempEqCnt
			\setcounter{TempEqCnt}{\value{equation}} % 将当前公式序号 赋给TempEqCnt
			\setcounter{equation}{29} % 当前公式序号变为x，x等于长公式应有的序号减1.
			\begin{figure*}[ht] %hb代表放在文章底部，%ht为放在文章顶部
%				\hrulefill  %上面那条横线
%				\begin{equation}
				\begin{equation}\label{System-ML-log2}	
				\begin{split}
					L^{''}(\mathbf{h}_{d_{R}})=
					\overset{2M\tau}{\underset{i=1}{\sum}}
					\left( \frac{(2/\sigma_{w}^2)\Phi^{''}(\sqrt{(2/\sigma_{w}^2)}\tilde{\textbf{\emph{a}}}^T_{R,p_{\mathrm{\uppercase\expandafter{\romannumeral1}}},i}\mathbf{h}_{d_{R}})}
					{\Phi(\sqrt{(2/\sigma_{w}^2)}\tilde{\textbf{\emph{a}}}^T_{R,p_{\mathrm{\uppercase\expandafter{\romannumeral1}}},i}\mathbf{h}_{d_{R}})}
					-
					\frac{(2/\sigma_{w}^2)(\Phi^{'}(\sqrt{(2/\sigma_{w}^2)}\tilde{\textbf{\emph{a}}}^T_{R,p_{\mathrm{\uppercase\expandafter{\romannumeral1}}},i}\mathbf{h}_{d_{R}}))^2}
					{\Phi^2(\sqrt{(2/\sigma_{w}^2)}\tilde{\textbf{\emph{a}}}^T_{R,p_{\mathrm{\uppercase\expandafter{\romannumeral1}}},i}\mathbf{h}_{d_{R}})}
					 \right)
					\tilde{\textbf{\emph{a}}}_{R,p_{\mathrm{\uppercase\expandafter{\romannumeral1}}},i}\tilde{\textbf{\emph{a}}}^T_{R,p_{\mathrm{\uppercase\expandafter{\romannumeral1}}},i}.
				\end{split}
			\end{equation}	
					
			\begin{equation}\label{System-Fisher}	
				\begin{split}
					\mathbf{J}=&-\mathrm{E}[{\mathrm{L}^{''}(\mathbf{h}_{d_{R}})}]		=-\overset{2M\tau}{\underset{i=1}{\sum}}
					\bigg(
					\mathrm{Pr}(\textbf{r}_{R,{p_{\mathrm{\uppercase\expandafter{\romannumeral1}}}},i}=1)
					(\frac{(2/\sigma_{w}^2)\Phi^{''}(\sqrt{(2/\sigma_{w}^2)}\textbf{\emph{a}}^T_{R,p_{\mathrm{\uppercase\expandafter{\romannumeral1}}},i}\mathbf{h}_{d_{R}})}
					{\Phi(\sqrt{(2/\sigma_{w}^2)}\textbf{\emph{a}}^T_{R,p_{\mathrm{\uppercase\expandafter{\romannumeral1}}},i}\mathbf{h}_{d_{R}})}
					-
					\frac{(2/\sigma_{w}^2)(\Phi^{'}(\sqrt{(2/\sigma_{w}^2)}\textbf{\emph{a}}^T_{R,p_{\mathrm{\uppercase\expandafter{\romannumeral1}}},i}\mathbf{h}_{d_{R}}))^2}
					{\Phi^2(\sqrt{(2/\sigma_{w}^2)}\textbf{\emph{a}}^T_{R,p_{\mathrm{\uppercase\expandafter{\romannumeral1}}},i}\mathbf{h}_{d_{R}})})\\
					+&\mathrm{Pr}(\textbf{r}_{R,{p_{\mathrm{\uppercase\expandafter{\romannumeral1}}}},i}=-1)
					(\frac{(2/\sigma_{w}^2)\Phi^{''}(-\sqrt{(2/\sigma_{w}^2)}\textbf{\emph{a}}^T_{R,p_{\mathrm{\uppercase\expandafter{\romannumeral1}}},i}\mathbf{h}_{d_{R}})}
					{\Phi(-\sqrt{(2/\sigma_{w}^2)}\textbf{\emph{a}}^T_{R,p_{\mathrm{\uppercase\expandafter{\romannumeral1}}},i}\mathbf{h}_{d_{R}})}
					-
					\frac{(2/\sigma_{w}^2)(\Phi^{'}(-\sqrt{(2/\sigma_{w}^2)}\textbf{\emph{a}}^T_{R,p_{\mathrm{\uppercase\expandafter{\romannumeral1}}},i}\mathbf{h}_{d_{R}}))^2}
					{\Phi^2(-\sqrt{(2/\sigma_{w}^2)}\textbf{\emph{a}}^T_{R,p_{\mathrm{\uppercase\expandafter{\romannumeral1}}},i}\mathbf{h}_{d_{R}})})											
					\bigg)\tilde{\textbf{\emph{a}}}_{R,p_{\mathrm{\uppercase\expandafter{\romannumeral1}}},i}\tilde{\textbf{\emph{a}}}^T_{R,p_{\mathrm{\uppercase\expandafter{\romannumeral1}}},i} \\	
					=&\overset{2M\tau}{\underset{i=1}{\sum}}
					\bigg(
					\frac{(2/\sigma_{w}^2)(\Phi^{'}(\sqrt{(2/\sigma_{w}^2)}\textbf{\emph{a}}^T_{R,p_{\mathrm{\uppercase\expandafter{\romannumeral1}}},i}\mathbf{h}_{d_{R}}))^2}
					{\Phi(\sqrt{(2/\sigma_{w}^2)}\textbf{\emph{a}}^T_{R,p_{\mathrm{\uppercase\expandafter{\romannumeral1}}},i}\mathbf{h}_{d_{R}})}+
					\frac{(2/\sigma_{w}^2)(\Phi^{'}(-\sqrt{(2/\sigma_{w}^2)}\textbf{\emph{a}}^T_{R,p_{\mathrm{\uppercase\expandafter{\romannumeral1}}},i}\mathbf{h}_{d_{R}}))^2}
					{\Phi(-\sqrt{(2/\sigma_{w}^2)}\textbf{\emph{a}}^T_{R,p_{\mathrm{\uppercase\expandafter{\romannumeral1}}},i}\mathbf{h}_{d_{R}})}
					\bigg)\tilde{\textbf{\emph{a}}}_{R,p_{\mathrm{\uppercase\expandafter{\romannumeral1}}},i}\tilde{\textbf{\emph{a}}}^T_{R,p_{\mathrm{\uppercase\expandafter{\romannumeral1}}},i}.	
			\end{split}
			\end{equation}			
			\hrulefill  %下面那条横线
		\end{figure*}
Based on \eqref{System-ML-log2}, Fisher information matrix can be written as \eqref{System-Fisher}. According to the property of cumulative distribution function for a normal distribution, we obtain the Fisher information matrix in \eqref{System-29}.
	\end{proof}
Assuming that the estimation of direct channel is unbiased, the error between the estimated channel and the actual channel obeys the Gaussian distribution, i.e., $\mathbf{h}_{d}-\breve{\textbf{\emph{h}}}_{d_{ML}}=\textbf{e}_{d}$, the elements of the error are drown from $\mathcal{CN}(0,\sigma_e^2/M)$ independently, with
	\begin{equation}
		\begin{split}				
			\sigma_e^2&=\sqrt{2}{tr}(\mathbf{J}^{-1}).
		\end{split}
	\end{equation}

\subsection{Reflecting Channel estimation}
In Phase $\mathrm{\uppercase\expandafter{\romannumeral2}}$, user sends a pilot sequence consisting of $\tau$ symbols given as
	\begin{equation}\label{System-addd1}
		\begin{split}
			\textbf{\emph{a}}_\mathrm{\uppercase\expandafter{\romannumeral2}} =[a_1,\ldots,a_\tau]^T,
		\end{split}
	\end{equation}
and the received signal at the BS in Phase $\mathrm{\uppercase\expandafter{\romannumeral2}}$ is
	\begin{equation}\label{System-20}
			\textbf{Y}_{p_{\mathrm{\uppercase\expandafter{\romannumeral2}}}}			=(\mathbf{G}\pmb\Theta\mathbf{h}_{r}+\mathbf{h}_{d})\textbf{\emph{a}}_\mathrm{\uppercase\expandafter{\romannumeral2}}^T+\mathbf{W}_{p_{\mathrm{\uppercase\expandafter{\romannumeral2}}}}=(\textbf{H}_{p_{\mathrm{\uppercase\expandafter{\romannumeral2}}}}\pmb\theta+\mathbf{h}_{d})\textbf{\emph{a}}_\mathrm{\uppercase\expandafter{\romannumeral2}}^T+\mathbf{W}_{p_{\mathrm{\uppercase\expandafter{\romannumeral2}}}},
	\end{equation}
  where $\textbf{H}_{p_{\mathrm{\uppercase\expandafter{\romannumeral2}}}}=\textbf{G}\mathrm{diag}(\mathbf{h}_{r})$. After $\mathrm{1}$-bit quantization, the signal can be expressed as $\textbf{r}_{p_{\mathrm{\uppercase\expandafter{\romannumeral2}}}}=\mathcal{Q}(\mathrm{vec}(\textbf{Y}_{p_{\mathrm{\uppercase\expandafter{\romannumeral2}}}}))$, and $\textbf{r}_{R,p_{\mathrm{\uppercase\expandafter{\romannumeral2}}}}$ represents the vector after realizable operation.
	
The signal at the BS through the reflecting channel is
	\begin{equation}\label{System-222}
			\textbf{Y}^{ref}_{p_{\mathrm{\uppercase\expandafter{\romannumeral2}}}} =\textbf{H}_{p_{\mathrm{\uppercase\expandafter{\romannumeral2}}}}\pmb\theta\textbf{\emph{a}}_\mathrm{\uppercase\expandafter{\romannumeral2}}^T+\textbf{e}_d\textbf{\emph{a}}_\mathrm{\uppercase\expandafter{\romannumeral2}}^T+\mathbf{W}_{p_{\mathrm{\uppercase\expandafter{\romannumeral2}}}}.
	\end{equation}
%	where $\textbf{e}_d=\mathbf{h}_{d}-\breve{\mathbf{h}_{d}}$, and
We aim to estimate $\textbf{H}_{p_{\mathrm{\uppercase\expandafter{\romannumeral2}}}}$.
	The received signal in \eqref{System-222} after vectorization is
	\begin{equation}\label{System-21}	\mathbf{y}_{p_{\mathrm{\uppercase\expandafter{\romannumeral2}}}}^{ref}=\mathbf{A}_{p_{\mathrm{\uppercase\expandafter{\romannumeral2}}}}\textbf{h}_{p_{\mathrm{\uppercase\expandafter{\romannumeral2}}}}+\bm{\varrho}_{p_{\mathrm{\uppercase\expandafter{\romannumeral2}}}}+\mathbf{w}_{p_{\mathrm{\uppercase\expandafter{\romannumeral2}}}},
	\end{equation}
	where $\mathbf{A}_{p_{\mathrm{\uppercase\expandafter{\romannumeral2}}}}=\textbf{\emph{a}}_\mathrm{\uppercase\expandafter{\romannumeral2}}\pmb\theta^T\otimes\mathbf{I}_{M}\in\mathbb{C}^{M\tau\times MN}$, $\bm{\varrho}_{p_{\mathrm{\uppercase\expandafter{\romannumeral2}}}}=\textbf{\emph{a}}_\mathrm{\uppercase\expandafter{\romannumeral2}}\otimes\textbf{e}_d$, and $\textbf{e}_d$ and $\text{\textbf{w}}_{p_{\mathrm{\uppercase\expandafter{\romannumeral2}}}}$.
	Similarly to the previous section, we reformulate all expressions into real domain as
	\begin{equation}\label{System-23}
		\mathbf{y}_{R,p_{\mathrm{\uppercase\expandafter{\romannumeral2}}}}^{ref}=	\mathbf{A}_{R,p_{\mathrm{\uppercase\expandafter{\romannumeral2}}}}\textbf{h}_{R,p_{\mathrm{\uppercase\expandafter{\romannumeral2}}}}+\bm{\varrho}_{R,p_{\mathrm{\uppercase\expandafter{\romannumeral2}}}}+\mathbf{w}_{R,p_{\mathrm{\uppercase\expandafter{\romannumeral2}}}},
	\end{equation}
	where	
	\begin{align}		
		\mathbf{y}_{R,p_{\mathrm{\uppercase\expandafter{\romannumeral2}}}}^{ref}&=\begin{bmatrix}
											\mathrm{Re}(\mathbf{y}^T_{p_{\mathrm{\uppercase\expandafter{\romannumeral2}}}}) & \mathrm{Im}(\mathbf{y}^T_{p_{\mathrm{\uppercase\expandafter{\romannumeral2}}}})
										\end{bmatrix}^T\in\mathbb{R}^{2M\tau\times 1},
		\\
		\mathbf{A}_{R,p_{\mathrm{\uppercase\expandafter{\romannumeral2}}}}&=\begin{bmatrix}
										\mathrm{Re}(\mathbf{A}_{p_{\mathrm{\uppercase\expandafter{\romannumeral2}}}})&  -\mathrm{Im}(\mathbf{A}_{p_{\mathrm{\uppercase\expandafter{\romannumeral2}}}})  \\ \mathrm{Im}( \mathbf{A}_{p_{\mathrm{\uppercase\expandafter{\romannumeral2}}}}) & \mathrm{Re}(\mathbf{A}_{p_{\mathrm{\uppercase\expandafter{\romannumeral2}}}})
										\end{bmatrix}\in\mathbb{R}^{2M\tau\times 2MN},
		\\
		\textbf{h}_{R,p_{\mathrm{\uppercase\expandafter{\romannumeral2}}}}&=\begin{bmatrix}
										\mathrm{Re}(\textbf{h}^T_{p_{\mathrm{\uppercase\expandafter{\romannumeral2}}}}) &
										\mathrm{Im}(\textbf{h}^T_{p_{\mathrm{\uppercase\expandafter{\romannumeral2}}}})
										\end{bmatrix}^T\in\mathbb{R}^{2MN\times 1},
		\\
			\bm{\varrho}_{R,p_{\mathrm{\uppercase\expandafter{\romannumeral2}}}}&=\begin{bmatrix}
										\mathrm{Re}(\bm{\varrho}^T_{p_{\mathrm{\uppercase\expandafter{\romannumeral2}}}}) &
										\mathrm{Im}(\bm{\varrho}^T_{p_{\mathrm{\uppercase\expandafter{\romannumeral2}}}})
								     \end{bmatrix}^T\in\mathbb{R}^{2M\tau\times 1},		
		\\
		\mathbf{w}_{R,p_{\mathrm{\uppercase\expandafter{\romannumeral2}}}}&=\begin{bmatrix}
										\mathrm{Re}(\mathbf{w}^T_{p_{\mathrm{\uppercase\expandafter{\romannumeral2}}}}) &
										\mathrm{Im}(\mathbf{w}^T_{p_{\mathrm{\uppercase\expandafter{\romannumeral2}}}})
									 \end{bmatrix}^T\in\mathbb{R}^{2M\tau\times 1}.
	\end{align}
Note that the output of the $\mathrm{1}$-bit ADCs after vectorization is $\textbf{r}_{p_{\mathrm{\uppercase\expandafter{\romannumeral2}}}}$, and $\textbf{r}_{R,p_{\mathrm{\uppercase\expandafter{\romannumeral2}}}}$ represents its real form. The BS performs the sign-refinement as
	\begin{equation}\label{System-sign-refinement}	
			\tilde{\textbf{\emph{a}}}_{R,p_{\mathrm{\uppercase\expandafter{\romannumeral2}}},i} =r_{R,p_{\mathrm{\uppercase\expandafter{\romannumeral2}}},i}\textbf{\emph{a}}_{R,p_{\mathrm{\uppercase\expandafter{\romannumeral2}}},i},
	\end{equation}	
where $\textbf{\emph{a}}_{R,p_{\mathrm{\uppercase\expandafter{\romannumeral2}}},i}^T$ is the $i$-th line of $\mathbf{A}_{R,p_{\mathrm{\uppercase\expandafter{\romannumeral2}}},i}$, $r_{R,p_{\mathrm{\uppercase\expandafter{\romannumeral2}}},i}$ is the $i$-th element of $\textbf{r}_{R,p_{\mathrm{\uppercase\expandafter{\romannumeral2}}}}$. We define two sets $\mathcal{S}$ and $\mathcal{P}$ as
	\begin{equation}\label{System-26}	
			\mathcal{S}=\{i:r_{R,p_{\mathrm{\uppercase\expandafter{\romannumeral2}}},i}\geq 0\},	\mathcal{P}=\{i:r_{R,p_{\mathrm{\uppercase\expandafter{\romannumeral2}}},i}< 0\}.
	\end{equation}
With these definitions, the likelihood function is
	\begin{subequations}
		\begin{align}
			%\begin{split}
				L(\acute{\textbf{h}}_{R,p_{\mathrm{\uppercase\expandafter{\romannumeral2}}}})\notag=&			
					Pr(\frac{\tilde{\textbf{\emph{a}}}_{R,p_{\mathrm{\uppercase\expandafter{\romannumeral2}}},i}^T\acute{\textbf{h}}_{R,p_{\mathrm{\uppercase\expandafter{\romannumeral2}}}}}{\sqrt{\frac{\frac{\sigma_{e}^2}{M}||\textbf{\emph{a}}_{\mathrm{\uppercase\expandafter{\romannumeral2}},\lceil\frac{i}{M}\rceil\% \tau}||^2+\sigma_{w}^2}{2}}}
					\geq-\textbf{\textit{w}}_{R,p_{\mathrm{\uppercase\expandafter{\romannumeral2}}},i}|\forall{i}\in\mathcal{S})\notag\\				
					&\cdot Pr(\frac{\tilde{\textbf{\emph{a}}}_{R,p_{\mathrm{\uppercase\expandafter{\romannumeral2}}},i}^T\acute{\textbf{h}}_{R,p_{\mathrm{\uppercase\expandafter{\romannumeral2}}}}}{\sqrt{\frac{\frac{\sigma_{e}^2}{M}||\textbf{\emph{a}}_{\mathrm{\uppercase\expandafter{\romannumeral2}},\lceil\frac{i}{M}\rceil\% \tau}||^2+\sigma_{w}^2}{2}}}\geq\textbf{\textit{w}}_{R,p_{\mathrm{\uppercase\expandafter{\romannumeral2}}},i}|\forall{i}\in\mathcal{P})\label{Eqn_d}\\
					=&
					\overset{2M\tau}{\underset{i=1}{\prod}}
					\Phi(\frac{\tilde{\textbf{\emph{a}}}_{R,p_{\mathrm{\uppercase\expandafter{\romannumeral2}}},i}^T\acute{\textbf{h}}_{R,p_{\mathrm{\uppercase\expandafter{\romannumeral2}}}}}{\sqrt{(\frac{\sigma_{e}^2}{M}||\textbf{\emph{a}}_{\mathrm{\uppercase\expandafter{\romannumeral2}},\lceil\frac{i}{M}\rceil\% \tau}||^2+\sigma_{w}^2)/2}})\label{Eqn_e},
		\end{align}
	\end{subequations}	
where $\Phi(x)$ is cumulative density function, i.e., $\Phi(x)=\int_{-\infty}^x\frac{1}{\sqrt{2\pi}}e^{-\frac{t^2}{2}}dt$, and $\textbf{\emph{a}}_{\mathrm{\uppercase\expandafter{\romannumeral2}},\lceil\frac{i}{M}\rceil\% \tau}$ is the $\lceil\frac{i}{M}\rceil\% \tau$-th element of $\textbf{\emph{a}}_\mathrm{\uppercase\expandafter{\romannumeral2}}$. Eq. \eqref{Eqn_e} is obtained due to the fact that $\text{\emph{w}}_{R,m,i}$ and $-\text{\emph{w}}_{R,m,i}$ have the same probability density function, i.e., $\mathrm{Pr}(k\geq\text{\emph{w}}_{R,m,i})=\mathrm{Pr}(k\geq-\text{\emph{w}}_{R,m,i})$, for an arbitrary constant $k$. Then the ML estimator is given as
	\begin{equation}\label{System-reflectChannel}	
		\begin{split}
			\breve{\textbf{h}}_{R,p_{\mathrm{\uppercase\expandafter{\romannumeral2}}},ML}=
			\underset{\acute{\textbf{h}}_{R,p_{\mathrm{\uppercase\expandafter{\romannumeral2}}}}\in\mathbb{R}^{2M\times 1}}{\mathrm{argmax}}\overset{2M\tau}{\underset{i=1}{\sum}}						\mathrm{ln}\bigg(\Phi(\frac{\tilde{\textbf{\emph{a}}}_{R,p_{\mathrm{\uppercase\expandafter{\romannumeral2}}},i}^T\acute{\textbf{h}}_{R,p_{\mathrm{\uppercase\expandafter{\romannumeral2}}}}}{\sqrt{\frac{\frac{\sigma_{e}^2}{M}||\textbf{\emph{a}}_{\mathrm{\uppercase\expandafter{\romannumeral2}},\lceil\frac{i}{M}\rceil\% \tau}||^2+\sigma_{w}^2}{2}}})\bigg),
		\end{split}
	\end{equation}
	which can be solved by using the method presented in Algorithm 1.
	
	 \begin{algorithm}
	 		\label{alg:aa}
	 		\caption{: Iterative algorithm to solve \eqref{System-reflectChannel}} %算法的名字
	 		\hspace*{0.02in} {\bf Initialize:} %算法的输入， \hspace*{0.02in}用来控制位置，同时利用 \\ 进行换行
	 		
 			\begin{algorithmic}[1]
 				\State Set the initial point $\acute{\textbf{h}}_{R,p_{\mathrm{\uppercase\expandafter{\romannumeral2}}},ML}^{(0)}$ % \State 后写一般语句	
 				\State Set current iteration number $n=0$, and the maximum number of iterations is $\mathrm{N_{ITRT}}$
 				\State Set the termination threshold $\epsilon$	and step size $\alpha$
			
		    \hspace*{-0.5in} {\bf Iterative update:} %算法的结果输出
 				\While{$||\acute{\textbf{h}}_{R,p_{\mathrm{\uppercase\expandafter{\romannumeral2}}},ML}^{(k)}-\acute{\textbf{h}}_{R,p_{\mathrm{\uppercase\expandafter{\romannumeral2}}},ML}^{(k-1)}||\geq\epsilon||\acute{\textbf{h}}_{R,p_{\mathrm{\uppercase\expandafter{\romannumeral2}}},ML}^{(k-1)}||$}
	 				\State $\acute{\textbf{h}}_{R,p_{\mathrm{\uppercase\expandafter{\romannumeral2}}},ML}^{(k)}=\acute{\textbf{h}}_{R,p_{\mathrm{\uppercase\expandafter{\romannumeral2}}},ML}^{(k-1)}+\alpha \nabla\mathit{v}(\acute{\textbf{h}}_{R,p_{\mathrm{\uppercase\expandafter{\romannumeral2}}},ML}^{(k-1)})$

	 				where
	 				
	 				$\nabla\mathit{v}(\acute{\textbf{h}}_{R,p_{\mathrm{\uppercase\expandafter{\romannumeral2}}},ML}^{(k-1)})=				
	 				\frac{1}{\sqrt{\pi}}\overset{2M\tau}{\underset{i=1}{\sum}}	
	 				\frac{1}{\sqrt{x_{den} } } 				\frac{\mathit{e}^{-\frac{||x_{num}||^2}{x_{den} }}}			{\Phi\big(\frac{x_{num}}{\sqrt{\frac{x_{den} }{2}}}\big)}\tilde{\textbf{\emph{a}}}_{R,p_{\mathrm{\uppercase\expandafter{\romannumeral2}}},i}$
	 				
	 				$x_{num} = \tilde{\textbf{\emph{a}}}_{R,p_{\mathrm{\uppercase\expandafter{\romannumeral2}}},i}^T\acute{\textbf{h}}_{R,p_{\mathrm{\uppercase\expandafter{\romannumeral2}}}}$
	 				
	 				$x_{den} = \frac{\sigma_{e}^2}{M}||\textbf{\emph{a}}_{\mathrm{\uppercase\expandafter{\romannumeral2}},\lceil\frac{i}{M}\rceil\% \tau}||^2+\sigma_{w}^2$	
 				\EndWhile

 			\end{algorithmic} 					
	 	\end{algorithm}

\section{simulation results and analysis}
In this section we present Monte-Carlo simulations of passive beamforming design and channel estimation to evaluate the proposed techniques. We set the amplitude coefficients of all elements in IRS to be $\mathrm{1}$ and the transmission power at the user is $\mathrm{P=1}$dB, so the SNR is defined as $\mathrm{SNR}=1/{\sigma_{w}^2}$. %The results presented in this paper are obtained by taking over 1000 Monte-Carlo realizations.

\subsection{Passive beamforming design}
Fig. \ref{Fig_2} illustrates the achievable rate as a function of SNR with different elements in IRS under $\mathrm{1}$-bit quantization. It is found that the system with IRS significantly outperforms the one without IRS, which proves that the IRS can effectively increase the achievable rate. The SDP method, gradient decent (GD) and BB method proposed in this paper achieves a better system performance compared to phase matching (PM). SDP method and GD method both overlap with BB, implying that they can achieve the optimal value. The result shows that achievable rate reaches a floor at high SNR regime due to the effect of $\mathrm{1}$-bit quantization at the receiver. In Fig. \ref{Fig_3}, we plot curves of achievable rate versus SNR with 10-bits quantization. It is noticed that
the proposed GD method performs better than the PM method. From Fig. \ref{Fig_2} and Fig. \ref{Fig_3}, we observe that the achievable rate depends on the bits of quantization. In low SNR regime, the achievable rate can be efficiently improved through
the proposed IRS optimization, while the achievable rate is bounded at the high SNR by the number quantization bits.
	\begin{figure}[htbp]
	\centering
		\vspace{-0.1cm}
	\includegraphics[scale=0.45]{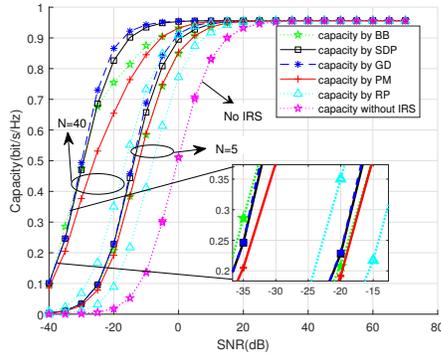}
	\vspace{-0.4cm}
	\caption{Achievable rate with 1-bit quantization and N=5 vs N=40.}
	\label{Fig_2}
	\centering
	\vspace{-0.1cm}
	\end{figure}

	\begin{figure}[htbp]
		\centering
			\vspace{-0.1cm}
		\includegraphics[scale=0.45]{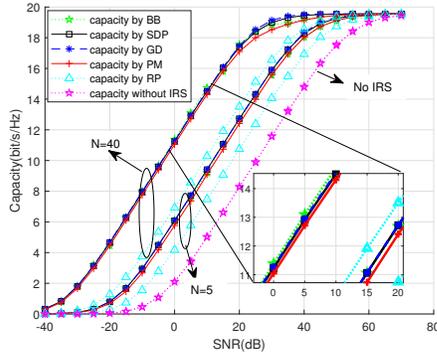}
		\vspace{-0.4cm}
		\caption{Achievable rate with 10-bits quantization and N=5 vs N=40.} \label{Fig_3}
		\centering
		\vspace{-0.3cm}
	\end{figure}
%\begin{figure}[htbp]
%	\centering
%	\vspace{-0.25cm}
%	\subfigure[$N=4$]{
%		\includegraphics[scale=0.2]{figs/b1N5-40.eps}
%		%\caption{fig1}
%	}\hspace{-6mm}
%	%	\quad
%	\subfigure[$N=8$]{
%		\includegraphics[scale=0.3]{figs/b10N5-40.eps}
%	}
%	\caption{Reflecting channel estimation with different $\tau$ when M=2.}
%	\label{Fig_4}
%\end{figure}

\subsection{Reflecting Channel estimation}

In this subsection, we evaluate the proposed channel estimation with $\mathrm{1}$-bit quantization, and compare the performance of three estimators, i.e., the proposed ML, least squares (LS) and linear minimum mean square error estimation (LMMSE). For LMMSE and LS estimator, we estimate the direct channel first, then turn on the IRS to estimate the reflecting channel in a coherence time.

In Fig. \ref{Fig_4}, we evaluate the performance of the estimators with different training length $\tau$ and SNR values. It can be seen that in low SNR, the performance of all three estimators improves as the pilot length increases, ML and LMMSE methods performed significantly better than LS method, while the performance of all three estimators become close and saturates as the SNR increases. This implies that we can't improve the performance of estimators by increasing the pilot length in high SNR due to 1-bit quantization. Similar observations can also be found in Fig. \ref{Fig_5}. The curves in Figs. \ref{Fig_4} and \ref{Fig_5} demonstrate that the normalized MSE (NMSE) decreases when the number of elements in IRS increase in low SNR, which suggests that IRS has a positive effect on channel estimation. Although both the ML and LMMSE methods can minimize the NMSE, the LMMSE method is impractical as it only estimates one column of the channel at a time slot $\tau$, and requires the channel statistics information. %Considering the actual situation, the proposed ML method can be a good solution.

%\begin{figure}[htbp]
%	\centering
%		\vspace{-0.25cm}
%	\subfigure[$N=4$]{
%		\includegraphics[width=4.4cm]{figs/compare_N4.eps}
%		%\caption{fig1}
%	}\hspace{-6mm}
%%	\quad
%	\subfigure[$N=8$]{
%		\includegraphics[width=4.4cm]{figs/compare_N8.eps}
%	}
%	\caption{Reflecting channel estimation with different $\tau$ when M=2.}
%	\label{Fig_4}
%\end{figure}

\begin{figure}[htbp]
	\centering
		\vspace{-0.3cm}
	\includegraphics[scale=0.455]{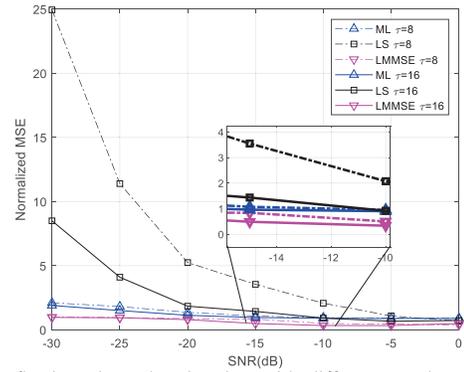}
	\vspace{-0.4cm}
	\caption{Reflecting channel estimation with different $\tau$ when $M=2$ and $N=4$.}
	\label{Fig_4}
	\centering
	\vspace{-0.1cm}
	\end{figure}

\begin{figure}[htbp]
	\centering
		\vspace{-0.3cm}
	\includegraphics[scale=0.45]{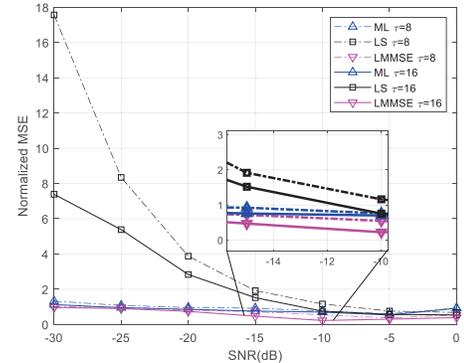}
	\vspace{-0.4cm}
	\caption{Reflecting channel estimation with different $\tau$ when $M=2$ and $N=8$.}
	\label{Fig_5}
	\centering
	\vspace{-0.1cm}
	\end{figure}

\section{Conclusion}
In this paper, we studied the beamforming optimization and channel estimation
for SIMO communication system with few-bit ADCs. The achievable rate was maximized by applying the different optimization techniques. Moreover, we proposed to use the ML estimator to obtain more accurate reflecting channel estimation in $\mathrm{1}$-bit quantization scenario.

\bibliographystyle{IEEEtran}
\bibliography{paper_ref}
\end{document}